\newtheorem{theorem}{Theorem}
\newtheorem{lemma}[theorem]{Lemma}
\newtheorem{corollary}[theorem]{Corollary}
\def\NN{\mathbb{N}}
\begin{document}
\title{\textsc{Minimum multicuts and Steiner forests for Okamura-Seymour graphs}}
\author{Arindam Pal\\\\
\textsl{Department of Computer Science and Engineering}\\
\textsl{Indian Institute of Technology, Delhi}\\
\textsl{New Delhi -- 110016, India}\\
\texttt{\href{mailto:arindamp@cse.iitd.ernet.in}{arindamp@cse.iitd.ernet.in}}}
\date{February 27, 2011}
\maketitle

\begin{abstract}
We study the problem of finding \emph{minimum multicuts} for an undirected \emph{planar graph}, where all the terminal vertices are on the boundary of the outer face. This is known as an \emph{Okamura-Seymour instance}. We show that for such an instance, the minimum multicut problem can be reduced to the \emph{minimum-cost Steiner forest} problem on a suitably defined \emph{dual graph}. The minimum-cost Steiner forest problem has a 2-approximation algorithm. Hence, the minimum multicut problem has a 2-approximation algorithm for an Okamura-Seymour instance.
\end{abstract}
\section{Introduction}
Computing the \emph{minimum multicut} of a graph is an important problem in combinatorial optimization. The problem is formally defined below.\\\\
\textsc{Minimum Multicut:} Given an undirected graph $G = (V,E)$ on $n$ vertices and $m$ edges with edge costs $c_e$ and a set $T = \{(s_1,t_1),\ldots,(s_k,t_k)\}$ of $k$ terminal pairs. Our goal is to find a set of edges $F$ of minimum cost separating all the terminal pairs $(s_i,t_i), 1 \le i \le k$, \emph{i.e.}, in the subgraph $H = (V,E \setminus F)$, there is no path between $s_i$ and $t_i$, for $1 \le i \le k$.\\\\
The minimum multicut problem is \textsc{NP-hard} and \textsc{APX-hard}, even for trees, which excludes the possibility of a PTAS. Garg, Vazirani and Yannakakis  gave a 2-approximation algorithm for trees \citep{GargVY97} and an $O(\log k)$-approximation algorithm for general undirected graphs \citep{GargVY96}. When $G$ is planar and all the terminals are on the boundary of the outer face of $G$, we denote such an instance of the minimum multicut problem an \emph{Okamura-Seymour} instance. The problem of finding edge-disjoint paths on such type of graphs were studied by Okamura and Seymour \citep{OS81}. They showed the following important theorem ($d_F(S)$ denotes the number of edges in $F$ exactly one of whose endpoints is in $S$).

\begin{theorem}[\textsc{Okamura-Seymour}]
Let $G=(V,E)$ be an undirected planar graph and let $R=\{(s_i,t_i): s_i,t_i \in V, 1 \le i \le k\}$ be a set of terminal pairs. Suppose the following conditions are satisfied.
\begin{enumerate}
	\item All terminals are on the boundary of the outer face of $G$.
	\item The Euler condition is satisfied: $(V,E \cup R)$ is Eulerian.
	\item The cut condition is satisfied: $d_E(S) \ge d_R(S)$, for all $S \subseteq V$.
\end{enumerate}
Then there exist edge-disjoint paths between $s_i$ and $t_i$, for $1 \le i \le k$.
\end{theorem}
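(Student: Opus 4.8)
The plan is to prove the theorem by induction on $|E|+|R|$, following Okamura and Seymour's original argument. First I would normalize the instance: discard demand pairs with $s_i=t_i$, assume $G$ is connected, and --- by the routine reductions of suppressing interior degree-two vertices, splitting at cut vertices, and deleting isolated vertices --- reduce to the case in which the outer face of $G$ is bounded by a circuit $B$, on which all the terminals lie. Fix once and for all an edge $e_0=u_0u_1$ of $B$. Call a set $S$ with $\emptyset\ne S\subsetneq V$ \emph{tight} if $d_E(S)=d_R(S)$; by the cut condition, tight sets are exactly the constraints in danger of being violated when edges are deleted, so controlling them is the whole game. Two observations frame the argument. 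By the Euler condition every vertex has even degree in $(V,E\cup R)$, so every edge cut of $(V,E\cup R)$ has even size; hence $d_E(S)$ and $d_R(S)$ have the same parity and the surplus $d_E(S)-d_R(S)$ is a nonnegative even integer. And by planarity, a tight set whose edge cut $\delta_E(S)$ (the edges with exactly one end in $S$) is inclusionwise minimal meets $B$ in a single arc, which tightly constrains how two tight sets can overlap.

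The crux is a routing step. I would show there is a demand pair $(s_i,t_i)$ such that, letting $A$ be the arc of $B$ from $s_i$ to $t_i$ that passes through $e_0$, chosen so that no demand's through-$e_0$ arc is properly contained in $A$ (the \emph{innermost} such demand), the instance obtained by deleting all edges of $A$ and removing the pair $(s_i,t_i)$ from $R$ still satisfies both hypotheses and is strictly smaller. The Euler condition survives automatically, since deleting an $s_i$--$t_i$ path and removing the matching demand changes the parity of no vertex. The cut condition is the real content: for a set $S$, the deletion lowers $d_E(S)$ by $|\delta_E(S)\cap A|$ and lowers $d_R(S)$ by $1$ if $S$ separates $s_i$ from $t_i$ and by $0$ otherwise; since the surplus is even and $|\delta_E(S)\cap A|$ has the same parity as whether $S$ separates $s_i$ from $t_i$, the constraint for $S$ can be destroyed only when $S$ has small surplus and $\delta_E(S)$ meets $A$ in many edges --- the extreme case being a \emph{tight} $S$ met by $A$ in at least two edges. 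The point of the routing step is that the innermost demand admits no such $S$.

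Establishing this is the main obstacle, and the only place all three hypotheses are used simultaneously. If $A$ met a tight set $S$ in two edges, then --- $A$ being a sub-arc of the boundary circuit and $\delta_E(S)$ a planar cut --- those two edges would cut off a proper sub-arc $A'$ of $B$ joining two terminals, and applying the cut condition to $S$ would show the demand realized across $A'$ could have been routed along a strictly shorter through-$e_0$ arc, contradicting the choice of $A$. The delicate part is running this when $\delta_E(S)$ is not a single curve: one takes a minimal offending family of tight sets and uncrosses it using the submodular inequality $d_E(X)+d_E(Y)\ge d_E(X\cap Y)+d_E(X\cup Y)$ together with the identity $d_R(X)+d_R(Y)=d_R(X\cap Y)+d_R(X\cup Y)+2\,d_R(X\setminus Y,\,Y\setminus X)$, uses the Eulerian parity to keep every intermediate surplus even, and uses the single-arc structure on the boundary to control the geometry. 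With the routing step in hand, I would apply the induction hypothesis to the reduced instance (handling any disconnection componentwise) and reassemble: the arc $A$ together with the returned $s_i$--$t_i$ walk, after deleting repeated edges, is a path for $(s_i,t_i)$, and the remaining paths are the ones produced by induction. Essentially all the difficulty is concentrated in the uncrossing-and-parity bookkeeping of the routing step; everything else is bookkeeping of a routine kind.
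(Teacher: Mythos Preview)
The paper does not prove this theorem at all: it is quoted as a known result of Okamura and Seymour, with a citation to \citep{OS81}, and no argument is supplied. So there is no ``paper's own proof'' to compare your proposal against.

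For what it is worth, your plan carries the right ingredients of the original argument---induction on the size of the instance, tight sets, the even-surplus parity forced by the Euler condition, and submodular uncrossing---but the specific reduction you describe is not how the standard proofs go, and as written it has a gap. Okamura--Seymour (and the later simplification by Frank) process a \emph{single} boundary edge $e_0=uv$ per inductive step: either no tight set separates $u$ from $v$, in which case $e_0$ is deleted and $(u,v)$ is added to $R$; or a maximal tight set $X$ with $u\in X$, $v\notin X$ exists, in which case one demand crossing $X$ is rerouted by one step and $e_0$ is contracted. Your variant instead deletes an entire innermost arc $A$ and removes one demand in a single stroke. The claimed safeguard---that no tight $S$ can meet $A$ in two edges---does not follow from the sketch you give: a tight singleton $\{a\}$ with $a$ interior to $A$ meets $A$ in both of its boundary edges, yet the demands incident to $a$ may all leave $A$ on the far side of $e_0$, so none of them has a through-$e_0$ arc properly contained in $A$, and your ``shorter arc'' contradiction does not fire. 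Deleting $A$ then drops $d_E(\{a\})$ by $2$ while $d_R(\{a\})$ is unchanged, destroying the cut condition. The one-edge-at-a-time mechanism is exactly what prevents this; if you want your plan to go through you should replace the whole-arc deletion by that finer step.
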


\noindent They also showed the following corollary about the associated \emph{multicommodity flow} problem.

\begin{corollary}[\textsc{Okamura-Seymour}]
Let $G=(V,E)$ be an undirected planar graph and let $R=\{(s_i,t_i): s_i,t_i \in V, 1 \le i \le k\}$ be a set of terminal pairs. Suppose the following conditions are satisfied.
\begin{enumerate}
	\item All terminals are on the boundary of the outer face of $G$.
	\item The cut condition is satisfied: $d_E(S) \ge d_R(S)$, for all $S \subseteq V$.
\end{enumerate}
Then there exists a feasible multicommodity flow between $s_i$ and $t_i$, for $1 \le i \le k$. Moreover, if $c_e \in \NN, \forall e \in E$ and $d_i \in \NN, \forall i: 1 \le i \le k$, then there exists a half-integer multicommodity flow.
\end{corollary}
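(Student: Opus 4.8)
The plan is to manufacture from $(G,R)$ an instance to which the Okamura--Seymour Theorem applies --- one with integral capacities and demands and with the Euler condition --- apply the Theorem, and scale the resulting edge-disjoint paths back down to a flow. First I would note that if some pair $(s_i,t_i)$ had $s_i$ and $t_i$ in distinct connected components of $G$, then the component $S$ containing $s_i$ would have $\sum_{e\in\delta_G(S)}c_e=0$ while separating that pair, contradicting the cut condition; hence every demand pair lies inside one component, and it is enough to argue componentwise, so we may assume $G$ is connected. Assume also, for now, that every $c_e$ and every $d_i$ is a positive integer (this is the hypothesis of the ``moreover'' clause; the real case is handled at the end). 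I would then form the planar multigraph $G'$ by replacing each edge $e$ of $G$ with $2c_e$ parallel copies, drawn alongside $e$ so that the outer face and the set of terminals on it are unchanged, and let $R'$ be the multiset containing $2d_i$ copies of each pair $(s_i,t_i)$.

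For every $S\subseteq V$ we have $d_{E'}(S)=2\sum_{e\in\delta_G(S)}c_e\ge 2\sum_{i\text{ separated}}d_i=d_{R'}(S)$, the middle inequality being twice the cut condition for $(G,R)$, so $(G',R')$ satisfies the (unweighted) cut condition of the Theorem; and in $(V,E'\cup R')$ every vertex $v$ has degree $2\sum_{e\ni v}c_e+2\sum_{i:\,v\in\{s_i,t_i\}}d_i$, which is even, while the graph stays connected, so it is Eulerian. The Theorem now supplies pairwise edge-disjoint paths in $G'$, one for each copy of each pair. Projecting every copy of an edge back onto that edge of $G$, edge-disjointness forces at most $2c_e$ of the resulting $s$--$t$ paths of $G$ to traverse any given edge $e$; assigning flow value $\tfrac12$ to each path then routes $\tfrac12\cdot 2d_i=d_i$ units between $s_i$ and $t_i$ while loading $e$ by at most $\tfrac12\cdot 2c_e=c_e$. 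This is a feasible half-integral multicommodity flow, which settles the ``moreover'' part. For arbitrary nonnegative reals $c_e,d_i$ obeying the cut condition, I would take rationals $c_e^{(n)}\downarrow c_e$ and $d_i^{(n)}\uparrow d_i$ --- the cut condition still holds, since the capacities only went up and the demands only went down --- clear denominators to reduce to the integral case, obtain feasible flows $f^{(n)}$ of total value at most $\sum_e c_e^{(n)}$ (which stays bounded), and extract a convergent subsequence; the limiting flow $f$ conserves flow, routes $d_i$ units of commodity $i$, and satisfies $f_e\le c_e$, hence is feasible.

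Essentially all the content is in the Okamura--Seymour Theorem, which I am treating as a black box; everything else is bookkeeping. The one step that really needs care is the parity argument: it is \emph{doubling} the number of copies, rather than taking just $c_e$ copies, that makes the Euler condition hold no matter what the parities of $\sum_{e\ni v}c_e$ and $\sum_i d_i$ are --- and that is also precisely why the flow one obtains is only half-integral. One should moreover confirm that the version of the Theorem being invoked reads ``Eulerian'' as ``every vertex has even degree'' on each component (this is the reason for reducing to the connected case at the outset), and verify the easy claims that planarity and the ``all terminals on the outer face'' property are preserved by the parallel-copy construction.
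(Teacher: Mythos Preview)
The paper does not supply a proof of this corollary at all: it is quoted as a known result of Okamura and Seymour \citep{OS81} and used only as background. So there is nothing in the paper to compare your argument against.

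That said, your derivation is the standard one and is correct. Doubling every edge to $2c_e$ parallel copies and every demand to $2d_i$ copies forces all vertex degrees in $(V,E'\cup R')$ to be even regardless of the original parities, preserves planarity and the outer-face condition, and preserves the cut condition; Theorem~1 then yields edge-disjoint paths, and scaling each path by $\tfrac12$ gives the half-integral flow. Your reduction to the connected case to sidestep any ambiguity in ``Eulerian'' is prudent, and your remark that the doubling is exactly what costs integrality is the right diagnosis. For the real-valued first assertion, the rational-approximation-plus-compactness route you sketch works because the multicommodity flow, viewed as a vector of per-commodity edge-flows, lives in a fixed finite-dimensional box and the feasibility constraints are closed; an alternative one-liner is to observe that feasibility is an LP condition and invoke continuity of LP feasibility in the right-hand side, but your argument is fine as written.
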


Schwarzler proved that computing edge-disjoint paths in such graphs \emph{without the Euler condition} is \textsc{NP-hard} \citep{Schwarzler09}. Wagner and Weihe gave a linear-time algorithm to compute edge-disjoint paths in such graphs \citep{WagnerW95}. The multicommodity flow problem for an Okamura-Seymour instance was studied by Matsumoto et al \citep{MatsumotoNS85}. Their algorithm decides whether $G$ has a feasible multicommodity flow, each from a source to a sink and of a given demand, and actually finds them if $G$ has one. If $G$ has $n$ vertices and $k$ source-sink pairs, their algorithm takes $O(kn + n^2\sqrt{\log n})$ time and $O(kn)$ space. However, the dual problem of computing the minimum multicut has not been addressed by them. We take a step in that direction. Our main result is the following.

\begin{theorem}
The minimum multicut problem on an Okamura-Seymour instance can be reduced to the minimum-cost Steiner forest problem on an appropriately defined dual graph. The minimum-cost Steiner forest problem has a $2$-approximation algorithm. Hence, the minimum multicut problem has a $2$-approximation algorithm for an Okamura-Seymour instance.
\end{theorem}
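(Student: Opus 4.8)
The plan is to give an approximation-preserving reduction from minimum multicut on the Okamura--Seymour instance $G$ to minimum-cost Steiner forest on a dual graph $G^*$, and then to run the $2$-approximation algorithm for minimum-cost Steiner forest on $G^*$. The engine of the reduction is planar duality: an edge set of a planar graph whose removal separates two vertices corresponds to a cycle of the dual, and since every terminal lies on the outer face, those cycles degenerate to paths between a handful of distinguished ``boundary'' dual vertices. One point worth noting is that the Steiner forest instance produced need not be planar; this is harmless, since the $2$-approximation for Steiner forest works on arbitrary weighted graphs.

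First I would normalize the instance: discard pairs already separated and components containing no terminal, and assume $G$ is connected with its outer boundary a simple cycle. List the terminal vertices $w_1,\dots,w_p$ in their cyclic order around the outer face; they cut the boundary into arcs $I_1,\dots,I_p$, with $I_a$ running from $w_a$ to $w_{a+1}$. Let $G^*$ be the ordinary planar dual of $G$ with the outer-face vertex split into $p$ vertices $v_1,\dots,v_p$, the dual of a boundary edge on $I_a$ being attached to $v_a$, and every dual edge carrying the cost of its primal edge. A terminal pair $(s_i,t_i)=(w_a,w_b)$ cuts the cyclic list $v_1,\dots,v_p$ into complementary arcs $P_i=\{v_a,\dots,v_{b-1}\}$ and $Q_i=\{v_b,\dots,v_{a-1}\}$. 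Since ``join some vertex of $P_i$ to some vertex of $Q_i$'' is not literally a Steiner-forest constraint, I introduce for each $i$ two fresh vertices $x_i,y_i$ together with zero-cost edges from $x_i$ to every vertex of $P_i$ and from $y_i$ to every vertex of $Q_i$, and I declare $(x_i,y_i)$ to be the $i$-th Steiner forest pair. This enlarged $G^*$, with these pairs and costs, is the promised dual graph.

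The heart of the matter is the following duality lemma: an edge set $F\subseteq E(G)$ is a multicut of $G$ if and only if, for every $i$, the set $F^*$ of dual edges connects some vertex of $P_i$ to some vertex of $Q_i$ in $G^*$ --- equivalently, connects $x_i$ to $y_i$ once the zero-cost hub edges are added. For the forward direction: if $w_a$ and $w_b$ lie in different components of $G-F$, then $F$ contains a minimal $w_a$--$w_b$ cut $\delta(X)$ with $G[X]$ and $G[V\setminus X]$ connected; its dual is a cycle of the unsplit dual which, since $w_a$ and $w_b$ lie on the connected outer face on opposite sides of this cycle, must pass through the outer-face vertex, hence crosses exactly two boundary edges, one on some arc $I_j$ and one on some arc $I_k$; a routine index count then shows that $v_j$ and $v_k$ land in $P_i$ and $Q_i$, so $F^*$ contains the required path. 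For the converse, a simple $P_i$--$Q_i$ path in $F^*$ is realized as a simple curve across the disk from a point of some arc $I_j$ on the $w_a$-side to a point of some arc $I_k$ on the $w_b$-side; it therefore separates $w_a$ from $w_b$ and is crossed by exactly the primal edges dual to it, so removing $F$ separates $w_a$ from $w_b$.

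Granting the lemma, the reduction is approximation-preserving both ways. From a multicut $F$ one obtains a Steiner forest of the same cost by taking $F^*$ together with the two zero-cost hub edges needed per pair; hence $\mathrm{OPT}_{SF}(G^*)\le\mathrm{OPT}_{MC}(G)$. From a Steiner forest $S$ one obtains a multicut of cost at most $c(S)$ by taking the primal edges dual to $S$ and discarding the hub edges; by the converse half of the lemma this is indeed a multicut. Consequently, running the $2$-approximation for Steiner forest on $G^*$ returns a Steiner forest of cost at most $2\,\mathrm{OPT}_{SF}(G^*)\le 2\,\mathrm{OPT}_{MC}(G)$, which the second construction converts into a multicut of $G$ of cost at most $2\,\mathrm{OPT}_{MC}(G)$; this proves the theorem. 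I expect the duality lemma to be the main obstacle --- specifically, the careful verification that a minimal cut separating two boundary vertices meets the outer boundary in exactly two arcs, and that the index bookkeeping places the two arc-vertices on opposite sides of the pair, in the presence of vertices shared by several pairs and parallel edges (and, if one does not reduce to the $2$-connected case as I did, a boundary that is only a closed walk). The remaining steps are routine once a planar embedding is fixed.
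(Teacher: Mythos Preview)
Your reduction is correct and rests on the same planar-duality picture as the paper's: split the outer-face dual vertex along the terminal arcs, attach per-pair auxiliary terminals to the appropriate arc-vertices, and invoke the cut/dual-cycle correspondence so that multicuts in $G$ become feasible Steiner forests in the dual and vice versa (the paper packages this as Lemmas~2, 4 and~5; your ``duality lemma'' is their content combined). The substantive difference is the gadget for the auxiliary edges. The paper puts one dual vertex $v_e$ per boundary edge and, for each pair $i$, terminals $u_i,v_i$ joined to the $v_e$ on their respective arcs by \emph{external} edges of large penalty cost $N=\sum_{e}c_e$; it then needs a separate lemma (Lemma~3) to argue that an optimal Steiner forest uses exactly $2k$ such edges, so that $\mathrm{OPT}_{SF}=\mathrm{OPT}_{MC}+2kN$. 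You instead collapse each boundary arc to a single dual vertex and attach your terminals $x_i,y_i$ by \emph{zero-cost} hub edges, which gives $\mathrm{OPT}_{SF}=\mathrm{OPT}_{MC}$ outright. Your choice is cleaner for the approximation claim: the factor~$2$ for Steiner forest transfers to multicut with no additive bookkeeping, whereas in the paper's penalty version the $2kN$ term is doubled by the approximation and one would still have to control how many external edges the \emph{approximate} forest uses --- Lemma~3 as stated handles only the optimum. The paper's version, on the other hand, keeps a single dual vertex per boundary edge and thereby avoids your preliminary normalization to a simple outer cycle.
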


\noindent The minimum-cost Steiner forest problem is formally defined below.\\\\
\textsc{Minimum-cost Steiner forest:} Given an undirected graph $G = (V,E)$ on $n$ vertices and $m$ edges with edge costs $c_e$ and a set $T = \{(s_1,t_1),\ldots,(s_k,t_k)\}$ of $k$ terminal pairs. Our goal is to find a set of edges $F$ of minimum cost connecting all the terminal pairs $(s_i,t_i), 1 \le i \le k$, \emph{i.e.}, in the subgraph $H = (V,F)$, there is some path between $s_i$ and $t_i$ for $1 \le i \le k$.\\\\
There is a polynomial-time 2-approximation algorithm for the minimum-cost Steiner forest problem in any undirected graph, due to Goemans and Williamson \citep{GoemansW95}. A comprehensive survey of all these results is given in \citep{Frank90}. For a more recent survey, see \citep{NavesS09}.

\section{Reduction of minimum multicut to Steiner forest}
\noindent Let $OF$ be the outer face of $G$ and $B$ be the boundary of $OF$. We construct the dual graph $G_d = (V_d, E_d)$ of the planar graph $G$ as follows.
\begin{enumerate}
	\item For each finite face $f$ of $G$, we associate a dual vertex $v_f$ of $G_d$.
	\item For each edge $e = (u,v) \in E$ on the boundary of two finite faces $f$ and $g$ of $G$, we associate a dual edge $e_d = (v_f,v_g) \in E_d$.
	\item For each terminal pair $(s_i,t_i)$, we add two dual vertices $u_i,v_i \in V_d$ such that $u_i,v_i$ are on $OF$ in the planar embedding of $G$. The terminal pair $(s_i,t_i)$ divides $B$ into two parts. Let us denote by $[s_i,t_i]$ the portion encountered while traversing $B$ from $s_i$ to $t_i$ in the anti-clockwise direction, and by $[t_i,s_i]$ the other portion. We associate $u_i$ with $[s_i,t_i]$ and $v_i$ with $[t_i,s_i]$.
	\item For each edge $e \in E$ on $B$, we add a dual vertex $v_e$ on $OF$ and a dual edge $e_d = (v_e,v_f) \in E_d$, where $e$ is the common boundary of $OF$ and a finite face $f$.
	\item For each edge $e \in E$ on $B$ that we come across during the traversal of $[s_i,t_i]$, we add a dual edge $e_d = (u_i,v_e) \in E_d$, where $e$ is the common boundary of $OF$ and a finite face $f$ and $v_e$ is a dual vertex on $OF$ added in the last step. Similarly, we add dual edges $(v_i,v_e)$ for the portion $[t_i,s_i]$.
\end{enumerate}
We assign a cost $c_e$ to dual edges of type (2) and (4). For dual edges of type (5), we assign a cost $N = \sum_{e \in E} c_e$, \emph{i.e.}, the sum of all edge costs in $G$. Note that there is a technical difference between the dual graph defined here and the traditional definition. Here we have many dual vertices on the infinite face, whereas traditionally there is only one dual vertex on the infinite face. This is illustrated in Figure \ref{fig1}. In general, the dual graph $G_d$ is not planar. We say that a set of edges $F \subseteq E$ \emph{corresponds} to a set of edges $F_d \subseteq E_d$, if for every edge $f \in F$, there exists one and only one edge $f_d \in F_d$ such that $f_d$ is the dual edge of $f$.

\begin{figure}
\centering
\includegraphics[width=0.7\textwidth]{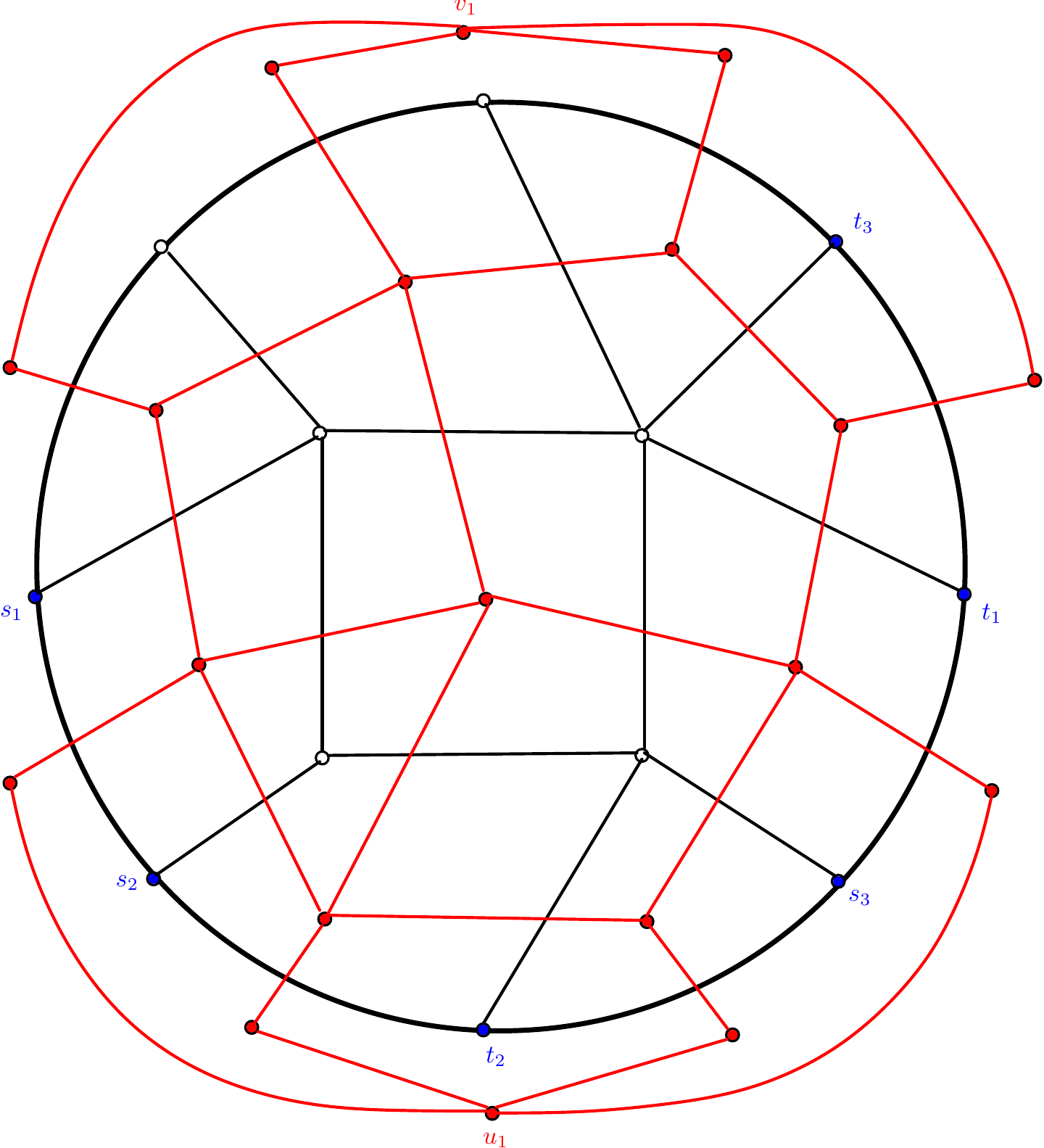}
\caption{A planar graph $G$ and its dual $G_d$. Terminals are in blue, dual vertices and edges are in red. Only $u_1$ and $v_1$ are shown for clarity. The other vertices $u_2,v_2,u_3,v_3$ and the edges incident on them can be drawn similarly.}
\label{fig1}
\end{figure}

Let $R = \{u_1,v_1,\ldots,u_k,v_k\}$ be the set of dual vertices on $OF$ containing all the $(u_i,v_i)$ pairs and let $S = V_d - R$ be the set of remaining vertices of $G_d$. We call $R$ the set of \emph{required vertices} and $S$ the set of \emph{Steiner vertices}. Our goal is to connect all the pairs $(u_i,v_i) \in R$. Let $SF$ be a \emph{Steiner forest} of the dual graph $G_d$ connecting all $(u_i,v_i)$ pairs in $R$ using some Steiner vertices in $S$. We call a dual edge an \emph{internal edge}, if both its endpoints are inside the outer boundary $B$, an \emph{external edge}, if both its endpoints are outside $B$, and a \emph{crossing edge}, if one endpoint is inside and the other is outside $B$. Note that the external edges have a cost of $N$ and the other edges have a cost of the corresponding primal edge. For convenience, we prove the following technical result.

\begin{lemma}
Suppose $P_i$ is a path in $G$ between the terminals $s_i$ and $t_i$. Further, $Q_i$ is a dual path in $G_d$ between the dual vertices $u_i$ and $v_i$ such that only its first and last edges are external edges. Then, $P_i$ and $Q_i$ must cross each other. Hence, if $(V(G),E(G)\setminus F)$ contains no $s_i$-$t_i$-path, then the dual edge set of $F$ plus two external edges contains a $u_i$-$v_i$-path.
\end{lemma}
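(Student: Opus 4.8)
I will prove the two assertions separately; the first (that $P_i$ and $Q_i$ must cross) is a Jordan-curve argument and is the heart of the matter, and the second (``hence'') then follows from the classical planar cut/dual-path correspondence. For the crossing claim I may assume $P_i$ is a simple path (otherwise replace it by a simple $s_i$-$t_i$ subpath, which only shrinks its edge set) and, after slightly pushing each edge of $P_i$ lying on $B$ into its incident finite face, that $P_i$ meets $B$ only at $s_i$ and $t_i$. Then $C := P_i \cup [s_i,t_i]$ is a closed Jordan curve on the sphere, splitting it into two open disks; exactly one of them, call it $D_2$, contains the interior of the outer face $OF$, and I call the other $D_1$. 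Since $C$ is a union of primal edges, every finite face of $G$ lies entirely in $D_1$ or in $D_2$, while $u_i$, $v_i$ and every dual vertex $v_e$ lie in $\overline{D_2}$. The key local fact: if $e \in [s_i,t_i]$ is a boundary edge with $e \notin P_i$, then the unique finite face $f$ incident with $e$ lies in $D_1$ (it is on the side of $e$ opposite $OF$), so $v_f \in D_1$ while $v_e \in D_2$.

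Now write $Q_i = (u_i = w_0, w_1, \dots, w_\ell = v_i)$. Its first edge is external, so by the construction of $G_d$ we have $w_1 = v_{e_1}$ for a boundary edge $e_1 \in [s_i,t_i]$. If $e_1 \in P_i$, then $w_1 w_2$ is non-external (an external edge at $v_{e_1}$ inside $Q_i$ could only be its last edge $v_{e_1} v_i$, which would force $e_1 \in [t_i,s_i]$, impossible since the two arcs share no edge), and the only non-external edge at $v_{e_1}$ is the dual of $e_1$; hence $Q_i$ uses the dual of an edge of $P_i$. If $e_1 \notin P_i$, then $w_2 = v_{f_1} \in D_1$, so $Q_i$ has entered $D_1$; since $w_\ell = v_i \in D_2$, some edge $w_j w_{j+1}$ of $Q_i$ runs from $D_1$ to $D_2$ and hence crosses $C$. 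Because external edges keep both ends in $\overline{D_2}$ while every other edge of $G_d$ is drawn transversally across exactly one primal edge (and inside faces elsewhere), this edge must be the dual of some edge $\hat e$ of $C = P_i \cup [s_i,t_i]$. If $\hat e \in P_i$, done. If $\hat e \in [s_i,t_i]\setminus P_i$, then $w_j w_{j+1}$ is the type-(4) edge $\{v_f, v_{\hat e}\}$ and, matching $D_2$-endpoints, $w_{j+1} = v_{\hat e}$; but the next edge of $Q_i$ out of $v_{\hat e}$ is again external, hence the last one, so $v_{\hat e} v_i$ is external, forcing $\hat e \in [t_i,s_i]$ --- a contradiction. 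In every case $Q_i$ contains a dual edge of some edge of $P_i$.

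For the ``hence'', suppose $(V(G), E(G)\setminus F)$ has no $s_i$-$t_i$ path. Let $\hat G$ be the subgraph of $G_d$ on the finite-face duals, the $v_e$'s and $u_i, v_i$, keeping only the type-(5) edges at $u_i, v_i$ and the type-(2) and type-(4) edges dual to edges of $F$. Adding an auxiliary vertex in $OF$ joined to $s_i$ and $t_i$ splits $OF$ into two faces $f_L, f_R$; a direct check shows that a $u_i$-$v_i$ path in $\hat G$ exists if and only if the planar dual of this augmented graph contains an $f_L^*$-$f_R^*$ path all of whose edges are dual to edges of $F$ (identify $u_i$ with $f_L^*$ and $v_i$ with $f_R^*$, and suppress the $v_e$'s, which only subdivide the edges incident with $f_L^*$ and $f_R^*$). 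By the classical planar min-cut/dual-path duality, the latter path exists because $F$ disconnects $s_i$ from $t_i$ in $G$: $F$ contains a bond $\delta_G(S)$ with $s_i \in S \not\ni t_i$ and $S, \overline S$ connected, and the dual edges of this bond form a simple $f_L^*$-$f_R^*$ path in that planar dual. Translating back gives a $u_i$-$v_i$ path in $G_d$ whose first and last edges are external and whose remaining edges are dual edges of $\delta_G(S) \subseteq F$, as required.

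The delicate point is the topological bookkeeping in the crossing claim --- in particular, justifying that any edge of the (non-planar) graph $G_d$ joining the two sides of the Jordan curve $C$ must be the dual of an edge of $C$; this relies on knowing exactly how each of the five edge types of $G_d$ is drawn in the embedding, and it is what forces the small perturbation removing the degenerate cases where $P_i$ runs along $B$. The ``hence'' step is then routine once $\hat G$ has been correctly identified with a subdivision of a planar dual, so that the standard cut/dual-path theorem applies verbatim.
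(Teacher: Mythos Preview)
Your argument is correct, but it is far more elaborate than the paper's own proof, and it attacks the crossing claim from the dual direction. The paper simply observes that $Q_i$ starts and ends on the outer face and runs through the interior of $B$, hence (as a Jordan arc together with a piece of $OF$) separates $s_i$ from $t_i$; any $s_i$-$t_i$ path $P_i$ must therefore cross it. You instead close up $P_i$ with the arc $[s_i,t_i]$ to get a Jordan curve $C$, track on which side of $C$ the various dual vertices sit, and then do a careful case analysis on the edge types of $G_d$ to show $Q_i$ must use the dual of some edge of $P_i$. Both are Jordan-curve arguments, but yours is the contrapositive viewpoint and is considerably more explicit about the combinatorics of the five edge types in $G_d$; this buys genuine rigor at the degenerate places (e.g.\ where $P_i$ runs along $B$) that the paper glosses over. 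You also supply a full argument for the ``hence'' clause via the classical bond/dual-cycle correspondence, which the paper does not prove at all --- it states the clause in the lemma but its three-line proof addresses only the crossing assertion.
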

\begin{proof}
The dual path $Q_i$ starts and ends at the outer face $OF$ and passes through the interior of $B$. Hence, it separates $s_i$ and $t_i$, \emph{i.e.} $s_i$ and $t_i$ lie on different sides of $Q_i$. So, any path $P_i$ between the terminals $s_i$ and $t_i$ must intersect the dual path $Q_i$.
\end{proof}

We claim that the minimum-cost Steiner forest can't use too many external edges of cost $N$.

\begin{lemma}
Let $MSF$ be a minimum-cost Steiner forest in $G_d$ connecting the pairs $(u_i,v_i) \in R$. Then, $MSF$ can use at most $2k$ external edges, one for each required vertex $w \in R$.
\end{lemma}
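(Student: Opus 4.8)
The plan is to prove something slightly stronger: there is a minimum-cost Steiner forest in which every required vertex is a leaf. Since, by the construction of $G_d$, the only edges incident to a required vertex $u_i$ or $v_i$ are the external (type-(5)) edges, and since each external edge is incident to exactly one required vertex, such a forest has at most $|R| = 2k$ external edges, one per required vertex, which is the claim. So I fix $MSF$ to be, among all minimum-cost Steiner forests connecting the pairs $(u_i,v_i)$, one that uses the fewest external edges and, subject to that, the fewest edges overall. As usual we may assume $MSF$ is a forest, since a cheapest solution can be pruned to a forest without raising its cost or its number of external edges.

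Suppose, for contradiction, that some required vertex --- say $u_i$ --- has degree at least $2$ in $MSF$. Let $T$ be the tree of $MSF$ containing $u_i$, and let $(u_i,v_a)$ and $(u_i,v_b)$, $v_a\neq v_b$, be two edges incident to $u_i$ in $T$ (both external). Deleting $(u_i,v_b)$ splits $T$ into $T_a$ (containing $u_i$ and $v_a$) and $T_b$ (containing $v_b$). Call a required vertex of $T_b$ \emph{unmatched} if its partner lies in $T_a$. If $T_b$ has no unmatched required vertex, then $MSF\setminus\{(u_i,v_b)\}$ is still a Steiner forest --- pairs living entirely inside $T_b$ stay connected within $T_b$, and every other pair stays connected within $T_a$ --- and it is cheaper by $N>0$, contradicting minimality. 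Hence $T_b$ has an unmatched vertex $w$ with partner $w'\in T_a$, and the unique $w$-$w'$ path in $T$ uses the edge $(u_i,v_b)$, hence passes through both $v_b$ and $u_i$.

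It remains to derive a contradiction in this last case, and here the idea is to \emph{reroute}. The key quantitative fact is that the internal and crossing edges of $G_d$ are exactly the duals of the edges of $G$, so their costs sum to exactly $N$ --- at most the cost of the single external edge $(u_i,v_b)$. Using the planar embedding of $G$ (all terminals on the outer face) together with the previous lemma, one shows there is a path $\pi$ of internal and crossing edges that reconnects $T_b$ to $T_a$; replacing $(u_i,v_b)$ by $\pi$ and pruning again to a forest yields a Steiner forest whose cost is no larger than that of $MSF$ but which uses strictly fewer external edges, contradicting the choice of $MSF$. Iterating this, no required vertex has degree exceeding $1$, so $MSF$ uses at most $2k$ external edges, one per required vertex.

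The main obstacle is precisely this last step: proving that whenever a required pair is forced to route through a required vertex by way of two expensive external edges, there is an interior reroute of no greater cost --- this is where the geometry of the Okamura--Seymour instance must really be used, while the forest surgery in the earlier cases is routine. A less self-contained alternative sidesteps the reroute by a global count: exhibit one Steiner forest of cost at most $(2k+1)N$ (for instance, the dual edge set of an arbitrary $s_i$-$t_i$ multicut of $G$, which costs at most $N$, together with $2k$ external edges, one per required vertex, invoking the consequence of the previous lemma), observe that any Steiner forest using $2k+1$ external edges already costs at least $(2k+1)N$, and then sharpen this comparison to a strict inequality --- which holds outside degenerate instances --- to conclude that the cost of $MSF$ is below $(2k+1)N$ and hence that $MSF$ uses at most $2k$ external edges.
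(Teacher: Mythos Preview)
Your write-up contains two distinct arguments, and neither is complete.

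Your primary argument (choose an $MSF$ with fewest external edges, then show every required vertex is a leaf by forest surgery) stalls exactly where you say it does: the reroute step. You assert that when deleting $(u_i,v_b)$ leaves an unmatched pair, one can reconnect $T_b$ to $T_a$ by a path $\pi$ of internal and crossing edges of total cost at most $N$, but you do not prove that such a $\pi$ exists, and invoking ``the previous lemma'' does not supply it. The previous lemma converts a primal cut into a dual path, not the other way around, and in any case gives you a $u_j$--$v_j$ dual path, not a $T_a$--$T_b$ path avoiding external edges. So as written this approach has a genuine gap. There is also a mismatch with the statement: you only conclude that \emph{some} minimum-cost Steiner forest has at most $2k$ external edges, whereas the lemma asserts this for \emph{every} minimum-cost Steiner forest.

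Your alternative ``global count'' is essentially the paper's proof, but you stop one observation short. You correctly build a feasible Steiner forest of cost at most $2kN+\sum_e c_e=(2k+1)N$ and note that a forest with $2k+1$ external edges already costs at least $(2k+1)N$; you then concede the comparison is non-strict. The paper closes this gap with a parity remark you are missing: along any $u_i$--$v_i$ path in the forest, the external edges are exactly the edges incident to required vertices, the two endpoints contribute one each, and every intermediate required vertex contributes two, so the number of external edges on that path is even. Thus ``more than $2$'' forces ``at least $4$'', and the paper jumps straight from ``more than $2k$ external edges'' to ``at least $2k+2$'', giving cost at least $(2k+2)N>2kN+\sum_e c_e$, which is the desired strict inequality. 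With that single observation, your alternative becomes the paper's proof; the elaborate reroute machinery is not needed.
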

\begin{proof}
Suppose this is not the case. If we use more than 2 external edges to connect a pair $(u_i,v_i) \in R$, we need at least 4 such edges. This follows from a simple pairing argument: if we go outside, we have to come back inside using an external edge and if we use only external edges, we need at least 4 such edges to connect $u_i$ and $v_i$. But we need at least 2 external edges to connect any pair, because all the edges incident on any $u_i,v_i$ are external edges. Hence, cost of $MSF$ is at least $(2k+2)N$, whereas a Steiner forest using $2k$ external edges will have cost at most $2kN + \sum_{e \in E} c_e < (2k+2)N$, which is strictly less than the cost of the $MSF$, contradicting the fact that $MSF$ is a minimum Steiner forest.
\end{proof}

Next we show the relationship between Steiner forests and multicuts.

\begin{lemma}
Every Steiner forest $SF$ in $G_d$ corresponds to a multicut $MC$ in $G$.
\end{lemma}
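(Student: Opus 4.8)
The plan is to read off, from a Steiner forest $SF$ of $G_d$, the primal edge set it picks out, and to show that this set separates every terminal pair by contradiction via the crossing lemma proved above. Precisely: given a Steiner forest $SF$ connecting all pairs $(u_i,v_i)\in R$, let $MC\subseteq E$ be the set of primal edges whose dual edges are internal or crossing edges of $SF$ -- equivalently, the primal edges of the type (2) and type (4) dual edges lying in $SF$. The external (type (5)) edges of $SF$ have no primal edge and contribute nothing to $MC$. This $MC$ is the multicut I claim $SF$ corresponds to, so the content of the lemma is that $(V(G),E(G)\setminus MC)$ has no $s_i$-$t_i$-path for any $i$.

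Suppose not: for some $i$ there is an $s_i$-$t_i$-path $P_i$ in $G$ with $E(P_i)\cap MC=\emptyset$. Since $SF$ connects $u_i$ and $v_i$, fix a simple $u_i$-$v_i$-path $Q_i$ inside $SF$. What I need is that $Q_i$ uses external edges only as its first and last edge, so that the crossing lemma applies verbatim. This is exactly where the preceding structural facts are used: every edge of $G_d$ incident to a required vertex $u_j$ or $v_j$ is external, and the lemma bounding the number of external edges (applied to a minimum-cost Steiner forest, which is all the reduction ever invokes this for) forces at most one external edge at each required vertex. Hence each $u_j,v_j$ is a leaf of $SF$, the interior of $Q_i$ meets no required vertex, and since the only external edges are the ones incident to required vertices, the interior of $Q_i$ is free of external edges; thus $Q_i$ enters the interior of $B$ with its second edge and leaves it with its second-to-last edge.

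With $Q_i$ in this normal form, the crossing lemma above guarantees that $P_i$ and $Q_i$ cross. A crossing between the primal path $P_i$ and the dual path $Q_i$ can only happen at a primal edge $e\in E(P_i)$ together with its dual edge $e_d\in E(Q_i)\subseteq E(SF)$; an external edge has no primal edge and so cannot account for a crossing, hence $e_d$ is an internal or crossing edge, and therefore $e\in MC$ by the definition of $MC$. But then $e\in E(P_i)\cap MC$, contradicting $E(P_i)\cap MC=\emptyset$. So no such $P_i$ exists, every terminal pair is separated in $(V(G),E(G)\setminus MC)$, and $MC$ is a multicut of $G$ corresponding to $SF$.

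The step I expect to be the real obstacle is the reduction to a $u_i$-$v_i$-path with external edges only at its two ends: without it the crossing lemma does not apply, and in fact a wasteful Steiner forest that routes $u_i$ to $v_i$ entirely through the outer face need not give rise to a multicut at all, so the hypothesis that we are looking at (minimum-cost) Steiner forests whose required vertices carry a single external edge is doing essential work and must be pinned down carefully using the earlier lemma. Once that normal form is in hand, the rest is a routine primal--dual crossing count in the plane.
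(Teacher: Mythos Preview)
Your argument is the paper's argument: pick a $u_i$--$v_i$ path $Q_i$ inside $SF$ and invoke the crossing lemma to conclude that any $s_i$--$t_i$ path in $G$ must hit a primal edge of $Q_i$. The paper's own proof does exactly this in three sentences and simply \emph{does not check} the hypothesis of the crossing lemma that only the first and last edges of $Q_i$ are external.

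You have been more scrupulous than the paper. Your worry is justified: with terminal pairs that interleave on the boundary one can build a Steiner forest connecting some $u_i$ to $v_i$ entirely through type~(5) edges, and the primal edge set it picks out is then empty and certainly not a multicut. So the lemma as literally stated is false, and your restriction to minimum-cost forests---using the external-edge bound to force every required vertex to be a leaf, hence every $Q_i$ to have external edges only at its two ends---is the correct repair. One caveat worth flagging for later: the $2$-approximate forest returned by Goemans--Williamson is not itself minimum-cost, so when the reduction extracts a multicut from \emph{that} forest you will still need either a larger choice of $N$ or a trivial pruning step to recover the leaf property there as well.
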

\begin{proof}
Consider the dual path $Q_i$ in $SF$ between the dual vertices $u_i$ and $v_i$. By Lemma 2, any path $P_i$ in $G$ between the terminals $s_i$ and $t_i$ must intersect $Q_i$. Therefore, primal edges corresponding to $Q_i$ is a $s_i-t_i$ cut. Since this is true for all terminal pairs, the primal edges corresponding to $SF$ is a multicut $MC$ in $G$.
\end{proof}

We observe that not every multicut $MC$ in $G$ is a Steiner forest $SF$ in $G_d$. We say that $MC$ is a \emph{minimal multicut} if for any edge $e \in MC, MC - \{e\}$ is not a multicut, \emph{i.e.} there exists terminals $s_i$ and $t_i$ such that there is a path $P_i$ between $s_i$ and $t_i$ in $G - MC \cup \{e\}$. We show the relationship between minimal multicuts and Steiner forests in the next lemma.

\begin{lemma}
Every minimal multicut $MC$ in $G$ corresponds to a Steiner forest $SF$ in $G_d$.
\end{lemma}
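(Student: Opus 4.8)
The plan is to establish the lemma one terminal pair at a time and then take a union. Fix a pair $(s_i,t_i)$. Since $MC$ is a multicut it contains an $s_i$-$t_i$ cut, hence it contains an inclusion-minimal one, say $C_i\subseteq MC$; let $X_i$ be the vertex set of the component of $s_i$ in $G-C_i$, so that $C_i$ is exactly the edge cut $\delta_G(X_i)$ (the set of edges with exactly one endpoint in $X_i$). The first step is to show $C_i$ is a \emph{bond}, i.e.\ that $G[X_i]$ and $G[V\setminus X_i]$ are both connected. The first is clear. If $G[V\setminus X_i]$ had a component $Y$ not containing $t_i$, then --- since $G$ is connected --- $Y$ would be adjacent only to $X_i$, so $\delta_G(X_i\cup Y)$ would be a strictly smaller $s_i$-$t_i$ cut contained in $MC$, contradicting minimality of $C_i$. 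Hence, by the classical planar-duality fact that the bonds of a connected planar graph are exactly the simple cycles of its dual, the dual edge set $C_i^{*}$ of $C_i$ is a simple cycle in the traditional planar dual $G^{*}$ of $G$ (the one with a single infinite-face vertex $\infty$, as opposed to $G_d$).

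The second step is to locate $C_i^{*}$ relative to the outer face and transfer it to $G_d$. A cycle of $G^{*}$ using only interior-face vertices separates no two vertices of $B$ (it encloses no boundary vertex), so since $C_i$ separates $s_i$ from $t_i$ the cycle $C_i^{*}$ must pass through $\infty$, and being simple it uses exactly two edges there, the duals of two boundary edges $e_1,e_2\in C_i$. These $e_1,e_2$ are the two transition edges of the partition of $B$ into $X_i$-vertices and $(V\setminus X_i)$-vertices; the single passage through $\infty$ forces $X_i\cap B$ to be one arc (several arcs would put more than two boundary edges in $C_i$, hence more than two edges of $C_i^{*}$ at $\infty$), and this arc contains $s_i$ while its complement contains $t_i$. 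Traversing $B$ anticlockwise from $s_i$ then gives, after relabelling, $e_1\in[s_i,t_i]$ and $e_2\in[t_i,s_i]$. Now delete $\infty$ and reinterpret the edges of $C_i^{*}$ in $G_d$: they form a simple path from $v_{e_1}$ to $v_{e_2}$ consisting of the type-(4) edge $(v_{e_1},v_{f_1})$ (with $f_1$ the finite face bordering $e_1$), possibly some type-(2) internal edges, and the type-(4) edge $(v_{e_2},v_{f_r})$, all of which are dual edges of edges of $C_i\subseteq MC$. Prepending the type-(5) edge $(u_i,v_{e_1})$ (present because $e_1\in[s_i,t_i]$) and appending $(v_{e_2},v_i)$ (present because $e_2\in[t_i,s_i]$) produces a $u_i$-$v_i$ path $Q_i$ in $G_d$ that uses exactly two external edges and otherwise only dual edges of $MC$.

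For the third step, put $SF=\bigcup_{i=1}^{k}Q_i$. It connects every required pair, so it is a Steiner forest of $G_d$, and it uses at most two external edges per pair, hence at most $2k$ in all, consistent with Lemma 3. To see that $SF$ \emph{corresponds} to $MC$ I must check that its non-external edges are precisely the set $MC_d$ of dual edges of $MC$. The inclusion ``$\subseteq$'' holds because each $Q_i$ uses only duals of $C_i\subseteq MC$. For ``$\supseteq$'' I would use minimality of $MC$: for $e\in MC$, the set $MC\setminus\{e\}$ is not a multicut, so for some pair $j$ it fails to be an $s_j$-$t_j$ cut and therefore contains no $s_j$-$t_j$ cut; since $C_j\subseteq MC$ \emph{is} an $s_j$-$t_j$ cut it cannot avoid $e$, i.e.\ $e\in C_j$. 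Hence $MC=\bigcup_i C_i$, so every dual edge of $MC$ occurs in $SF$, and $SF$ corresponds to $MC$.

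The main obstacle is the second step: making the dual-cycle picture precise, and in particular verifying that the inclusion-minimal $s_i$-$t_i$ cut is genuinely a bond, so that $C_i^{*}$ is a single simple cycle meeting $\infty$ exactly once with its two $\infty$-edges lying on opposite boundary arcs $[s_i,t_i]$ and $[t_i,s_i]$; this is exactly what lets the cycle be rerouted through $u_i$ and $v_i$ using only two external edges and introducing no edge outside $MC_d$. For cleanliness I would assume $G$ is $2$-connected, so that $B$ is a simple cycle; the general case, where $B$ is a closed walk through cut vertices, follows by the same argument applied to the relevant block.
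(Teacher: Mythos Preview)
Your proof is correct and considerably more careful than the paper's. The paper's argument is essentially a one-line appeal to the second sentence of Lemma~2 (``if $G\setminus F$ has no $s_i$--$t_i$ path then $F_d$ plus two external edges contains a $u_i$--$v_i$ path''), followed by a claim that minimality forces \emph{uniqueness} of the dual path. It never verifies that the two needed external edges actually exist in $G_d$, i.e.\ that the two boundary edges of the separating cut land on opposite arcs $[s_i,t_i]$ and $[t_i,s_i]$, which is exactly the content of your second step. Your route via the classical bond--cycle duality in the single-vertex-at-infinity dual $G^{*}$ makes this precise: identifying $C_i^{*}$ as a simple cycle through $\infty$ and reading off that its two $\infty$-edges sit on opposite arcs is the honest work the paper suppresses. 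You also deploy minimality for the right purpose --- not to force a unique dual path, but to show $MC=\bigcup_i C_i$, so that every edge of $MC$ has its dual in $SF$ and the ``corresponds'' relation holds. The trade-off is that your argument needs $B$ to be a simple cycle (your acknowledged $2$-connectivity assumption), while the paper's topological hand-wave appears to sidestep this only because it never confronts the issue.
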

\begin{proof}
Since $MC$ is a multicut in $G$, there is no path between $s_i$ and $t_i$ in $G - MC$. By Lemma 2, there must be a dual path between $u_i$ and $v_i$. Hence all pairs $(u_i,v_i) \in R$ are connected. Moreover, since $MC$ is minimal, there is only one path between $u_i$ and $v_i$. Thus, the dual edges corresponding to $MC$ is a Steiner forest $SF$ in $G_d$.
\end{proof}

Combining Lemmas 3, 4 and 5, we arrive at Theorem 3.

\section{Conclusion and open problems}
In this paper, we studied the minimum multicut problem for an undirected planar graph, where all the terminal vertices are on the boundary of the outer face. We showed its relation to the minimum-cost Steiner forest problem in the dual graph and gave a $2$-approximation algorithm. Are there similar relationships between these problems in a general undirected graph? Is there a direct $2$-approximation algorithm for the minimum multicut problem without reducing it to the Steiner forest problem?

\bibliographystyle{plainnat}
\bibliography{mcos}

\end{document}